\documentclass{article}
\usepackage[utf8]{inputenc}

\title{Randomized Privacy Budget Differential Privacy}
\author{Meisam Mohammady }
\date{December 2018}
\usepackage{booktabs}
\usepackage{amsmath}
\usepackage{graphicx}
\usepackage{algorithm2e}
\usepackage{mathtools}
\usepackage[thinc]{esdiff}
\usepackage{courier}
\usepackage{amsthm}
\usepackage{adjustbox,multirow}
\newcommand{\subhead}[1]{\vspace {0.04in}\noindent{\textbf{#1.}}}

\newcommand{\DP}{differential privacy }

\usepackage{courier}
\usepackage{amsthm}
\newtheorem{thm}{Theorem}[section]
\newtheorem{lem}[thm]{Lemma}

\newtheorem{coro}[thm]{Corollary}

\newtheorem{defn}{Definition}[section]
\usepackage{bbm, dsfont}

\usepackage{cite}
\usepackage{amsmath}
\usepackage{amssymb}
\newcommand{\Adj}{\text{Adj}}

\newcommand{\RN}[1]{%
  \textup{\uppercase\expandafter{\romannumeral#1}}%
}
\newcommand\norm[1]{\left\lVert#1\right\rVert}

\newcommand{\D}{\mathsf D}
\newcommand{\R}{\mathsf R}

\newcommand{\Prob}{\mathbb P}
\usepackage{amssymb}
\begin{document}
\maketitle
\section{Abstract}
While pursuing better utility by discovering knowledge from the data, individual's privacy may be compromised during an analysis. To that end, differential privacy has been widely recognized as the state-of-the-art privacy notion. By requiring the presence of any individual's data in the input to only marginally affect the distribution over the output, differential privacy provides strong protection against adversaries in possession of arbitrary background. However, the privacy constraints (e.g., the
degree of randomization) imposed by differential privacy may render
the released data less useful for analysis, the fundamental trade-off
between privacy and utility (i.e., analysis accuracy) has
attracted significant attention in various settings. In this report we present DP mechanisms with randomized parameters, i.e., randomized privacy budget, and formally
analyze its privacy and utility and demonstrate that randomizing privacy budget in DP mechanisms will boost the accuracy in a humongous scale.
\section{Backgrounds}
\begin{defn}
Let $\epsilon, \delta \geq 0$. A mechanism $M: \D \times \Omega \to \R$ is 
$(\epsilon, \delta)$-differentially private for $\Adj$ if for all $d,d' \in \D$ such that $\Adj(d,d')$, we have
\begin{align}	\label{eq: standard def approximate DP original}
\Prob(M(d) \in S) \leq e^{\epsilon} \Prob(M(d') \in S) + \delta, \;\; \forall S \in \mathcal M. 
\end{align}
If $\delta=0$, the mechanism is said to be $\epsilon$-differentially private. 
\end{defn}
 
\begin{defn}(Usefulness Definition).
\label{defn:useful}
A database
mechanism $M_q$ is ($\zeta,\gamma(\zeta,u)$)-useful if with probability
$1 - \gamma(\zeta,u)$, for every database $d \subseteq \D$, $|M_q(d)- q(d)| \leq \zeta$.
\end{defn}

\section{Randomized Parameter DP}

Let $M_q(d,u) = q(d) \bigoplus \omega(u)$ be a randomized ($\epsilon(u),\delta(u)$)-differentially private mechanism where $\omega(u)$ is a random oracle with specified set of parameters $u$ and $\bigoplus$ stands for the corresponding operator. Also suppose $M_q(d,u)$ is ($\zeta,\gamma(\zeta,u)$)-useful. Define by  $\mathcal M_q(d) = q(d) \bigoplus \omega(u)$, with $u \sim \mathcal F$, the distribution of all possible randomized mechanism $M_q(d,u)$ where $\mathcal  F$ is a probability density function for all parameters in $u$.
The optimal utility achieved due to the application of optimal pdf $\mathcal F$ is shown in the following.
\begin{equation}
\label{gen:ut}
       U(\zeta)= Max \ \{E\left[\Prob(|M_q(d,u)-q(d)|)<\zeta\right]\}
\end{equation}
Accordingly, we say that $\mathcal M_q(d)$ improves the privacy-utility trade-off if we have
\begin{itemize}
    \item \textbf{Case \RN{1} ($\delta=0$)}
    \begin{equation}
    \label{gen:conddelt0}
        E(e^u)= e^{\epsilon(u_0)} \Rightarrow U(\zeta)>1-\gamma(\zeta,u_0)
    \end{equation}
    over
    \item  \textbf{Case \RN{2} ($\delta>0$)}
    \begin{eqnarray}
      E(\delta(u))= \delta(u_0) & \Rightarrow U(\zeta)>1-\gamma(\zeta,u_0) \\ & E(e^u)< e^{\epsilon(u_0)}
    \end{eqnarray}
\end{itemize}  
where, $E(.)$ denotes the expected value over distribution $\mathcal F$.
We now derive the corresponding conditions for two popular differentially private mechanisms. In particular, a \emph{Laplace Mechanism} modifies an answer to a numerical query by adding independent
and identically distributed (i.i.d.) zero-mean noise distributed~\cite{textbook}, \cite{lamport93}, \cite{lamport94} according to a Laplace distribution. Recall that the Laplace distribution
with mean zero and scale parameter $b$, denoted $Lap(b)$,
has density $p(x;b)=\frac{1}{2b}exp(-\frac{|x|}{b})$ and variance $2b^2$ . Moreover, for $\omega \in \mathbb R^k$ with $\omega_i$ i.i.d. and $\omega_i \sim Lap(b)$ , denoted $\omega \sim Lap(b)^k$, we have $p(\omega;b)=(\frac{1}{2b})^k exp(-\frac{ \norm \omega_1 }{b})$, $E( \norm \omega_1)=b$, and $\Prob( \norm \omega_1 \geq tb)=e^{-t}$. 
\begin{thm}	\label{thm: Lap mech}
Let $q: \D \to \mathbb R^k$ be a query , $epsilon>0$.
Then the  mechanism $M_q: \D \times \Omega \to \mathbb R^k$ 
defined by $M_q(d) = q(d) + w$, with $w \sim Lap(b)^k$, 
where $b \geq \frac{\Delta_1 q}{\epsilon}$
is $\epsilon$-differentially private.
\end{thm}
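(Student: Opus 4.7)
The plan is to verify the inequality $\Prob(M_q(d)\in S)\le e^\epsilon\,\Prob(M_q(d')\in S)$ pointwise at the level of the density, then integrate. Fix arbitrary adjacent databases $d,d'\in\D$ and a measurable set $S\subseteq\mathbb R^k$. Writing $M_q(d)=q(d)+w$ with $w\sim \Lap(b)^k$, a change of variables gives that $M_q(d)$ has density $p_d(z)=(2b)^{-k}\exp(-\|z-q(d)\|_1/b)$, and similarly for $d'$. So it suffices to prove the pointwise ratio bound
\begin{equation*}
\frac{p_d(z)}{p_{d'}(z)} \;=\; \exp\!\left(\frac{\|z-q(d')\|_1-\|z-q(d)\|_1}{b}\right) \;\le\; e^\epsilon \qquad \forall z\in\mathbb R^k,
\end{equation*}
because integrating $p_d(z)\le e^\epsilon p_{d'}(z)$ over $S$ yields exactly the differential privacy condition with $\delta=0$.

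To establish the pointwise bound, I would apply the reverse triangle inequality in $\ell_1$:
\begin{equation*}
\|z-q(d')\|_1-\|z-q(d)\|_1 \;\le\; \|q(d)-q(d')\|_1 \;\le\; \Delta_1 q,
\end{equation*}
where the last step is the definition of the $\ell_1$-sensitivity of the query $q$ on adjacent inputs. Dividing by $b$ and using the hypothesis $b\ge \Delta_1 q/\epsilon$ immediately gives the exponent at most $\epsilon$, which finishes the pointwise comparison.

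Because the inequality $p_d(z)\le e^\epsilon p_{d'}(z)$ is pointwise and the densities are everywhere positive, integrating over any measurable $S$ is a routine monotonicity step with no measure-theoretic subtleties and no additive $\delta$ term is incurred. The role of $d$ and $d'$ is symmetric, so the same argument gives the reverse direction without extra work. The only real content is the sensitivity-to-sensitivity-scaled-noise matching embodied in the reverse triangle inequality; I expect that to be the one place where the argument could look subtle, since it is what converts a global worst-case bound on the query into a bound on the density ratio holding for \emph{every} $z$ simultaneously.
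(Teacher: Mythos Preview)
Your argument is correct and is exactly the standard proof of the Laplace mechanism: bound the density ratio pointwise via the reverse triangle inequality in $\ell_1$, use the definition of $\Delta_1 q$ and the assumption $b\ge \Delta_1 q/\epsilon$, then integrate over $S$. The paper itself states this theorem as a known background result and does not supply a proof, so there is nothing to compare against; your write-up matches the textbook derivation that the paper is implicitly citing.
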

Hence, $\omega(u)$ is a Laplace distribution where $u=1/b=\epsilon$. Also, $\gamma(\zeta,u_0)=e^{-\zeta \epsilon}$. Thus, equations~\ref{gen:ut},\ref{gen:conddelt0} can be re-written as follows.  
\begin{equation}
   U(\zeta)= Max \ E(1-e^{-\zeta \epsilon})
\end{equation}
\begin{equation}
    \label{gen:conddelt0}
        E(e^\epsilon)= e^{\epsilon_0} \Rightarrow U(\zeta)>1-e^{-\zeta \epsilon}
    \end{equation}
Similarly, for a Gaussian mechanism, we have

\begin{equation}
   Min \ E(e^{\epsilon})=\int^{\infty}_{-\infty} f(\epsilon) e^{\epsilon}  d\epsilon
\end{equation}
over 
\begin{equation}
 E(Q(\frac{\zeta}{\sqrt{\frac{2\zeta+1}{2\epsilon}}}))=\int^{\infty}_{-\infty} f(\epsilon) Q(\frac{\zeta}{\sqrt{\frac{2\zeta+1}{2\epsilon}}})   d \epsilon \leq \delta
\end{equation}

\section{Privacy and Utility Analysis}
\label{sec5}
In this section, we formally
characterize the privacy and the utility of the Randomized DP mechanism. 

\subsection{Deriving PDF of Randomized DP}

we can write the CDF of the output of an Randomized DP Laplace mechanism in terms of the \emph{Moment Generating Function (MGF)} for the probability distribution $f_{\frac{1}{b}}$. Recall that MGF of a random variable is an alternative specification of its
probability distribution, and hence provides the basis of an
alternative route to analytical results compared with working directly
with probability density functions or cumulative distribution
functions. In particular,
\begin{defn}(\textit{Moment Generating Function})
The moment-generating function of a random variable $x$ is $M_{X}(t):=\mathbb E \left[e^{tX}\right], t\in \mathbb {R}$ wherever this expectation exists. The moment-generating function is the expectation of the random variable $e^{tX}$.
\end{defn}
Accordingly, in the following, we give a general formula for the probability of any measurable event originated from an Randomized DP Laplace Mechanism.
\begin{thm}	\label{thm: RPLap mech}
The search space of an Randomized DP Laplace mechanism is as large as the space of all PDFs with non-negative support and existing MGF. Moreover, generated PDFs are all log-convex.
\end{thm}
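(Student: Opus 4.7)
The plan is to read both halves of the theorem off a single integral representation of the output density. Marginalising $\epsilon \sim \mathcal F$, the noise $\omega(u)$ that the mechanism adds to $q(d)$ has density
\begin{equation*}
p(x) \;=\; \int_0^\infty f(\epsilon)\,\frac{\epsilon}{2}\,e^{-\epsilon|x|}\,d\epsilon, \qquad x \in \mathbb R,
\end{equation*}
which I will treat as the fundamental object. For the first half of the claim I would argue by bijection: any $\mathcal F$ specifying such a mechanism must be supported on $\mathbb R_{\geq 0}$ (since $\epsilon = 1/b > 0$) and must have a finite moment generating function in a neighbourhood of $t=1$, because the privacy side of the theory (in particular the expected-privacy condition $E(e^\epsilon) = e^{\epsilon_0}$ from Case I) is well-posed only under this assumption. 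Conversely, any such $f$ yields a bona fide mechanism, because Fubini gives $\int_{\mathbb R} p(x)\,dx = \int_0^\infty f(\epsilon)\,d\epsilon = 1$, so $p$ is a genuine density. Hence the parameter space is in natural bijection with the collection of non-negatively supported PDFs with existing MGF.

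For log-convexity I would apply H\"older's inequality pointwise inside the integral. Writing $g(\epsilon) := \tfrac{1}{2}\epsilon f(\epsilon) \geq 0$ and using, for $\lambda \in (0,1)$ and $x_1, x_2 \geq 0$, the factorisation $e^{-\epsilon(\lambda x_1 + (1-\lambda)x_2)} = (e^{-\epsilon x_1})^\lambda (e^{-\epsilon x_2})^{1-\lambda}$, the integrand of $p(\lambda x_1 + (1-\lambda)x_2)$ splits as $[g(\epsilon) e^{-\epsilon x_1}]^\lambda \cdot [g(\epsilon) e^{-\epsilon x_2}]^{1-\lambda}$. H\"older with conjugate exponents $1/\lambda$ and $1/(1-\lambda)$ then yields $p(\lambda x_1 + (1-\lambda)x_2) \leq p(x_1)^\lambda\, p(x_2)^{1-\lambda}$, and taking logarithms gives convexity of $\log p$ on $[0,\infty)$; the symmetry $p(-x) = p(x)$ extends this in the usual sense. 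As a sanity check, $p$ is the Laplace transform of the non-negative measure $g(\epsilon)\,d\epsilon$, so Bernstein's theorem identifies $p$ as completely monotone on $(0,\infty)$, and completely monotone functions are classically log-convex, giving an alternative route to the same conclusion.

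The main obstacle is interpretational rather than computational. H\"older is a one-liner and the parameter-space description is essentially a tautology; what needs care is distinguishing the two objects the theorem compares, namely the parameter space (PDFs of $\epsilon$ with non-negative support and finite MGF) and the output family (symmetric, completely monotone, log-convex densities on $\mathbb R$), and recognising that the Laplace-transform map $f \mapsto p$ is the bijection linking them. I would also want to verify that finite MGF — rather than mere integrability of $f$ — is the sharp regularity condition, by checking that it is exactly what the expected-privacy conditions in both Case I and Case II implicitly demand.
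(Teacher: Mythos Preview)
Your approach is essentially the paper's own: both arguments identify the output density with (the derivative of) the moment generating function of the second-fold law $f_{1/b}$, and both read log-convexity off that representation---the paper by citing the standard fact that MGFs are positive and log-convex, you by spelling out the H\"older step that proves that fact. Your Bernstein/complete-monotonicity remark is a pleasant alternative route not mentioned in the paper but entirely consistent with it.

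One small caution: the sentence ``the symmetry $p(-x)=p(x)$ extends this in the usual sense'' is not right if read literally. The mixture density has a strict global maximum at the origin (indeed $p(0)=\tfrac12\mathbb E[\epsilon] > \tfrac12\mathbb E[\epsilon e^{-\epsilon|x|}]=p(x)$ for $x\neq 0$), so $\log p$ cannot be convex on all of $\mathbb R$. What your H\"older argument actually delivers---and what the paper's statement should be read as---is log-convexity on each half-line, equivalently log-convexity of $p$ as a function of $|x|$. Just rephrase that clause and the proof is clean.
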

\vspace{0.05in}
Thus, for a PDF with non-negative support (scale parameter is always non-negative), the Randomized DP Laplace mechanism outputs another
PDF using the MGF (CDF is the moment and PDF is its derivative) as shown in Equation~11 in Appendix \cite{lamport93}. However, a challenge is that not all random variables have moment generating functions (MGFs). Fortunately, MGFs possess an appealing composability property between
independent probability distributions, which can be used to provide us with a search space of all linear combinations of a set of popular distributions with known MGFs (infinite number of RVs). 
\begin{thm}[MGF of Linear Combination of RVs]
\label{thm:lin}
 If $x_1, x_2$, $\cdots, x_n$ are $n$ independent RVs with respective MGFs $M_{x_i}(t)=\mathbb E (e^{t x_i})$ for $i = 1, 2,\cdots , n$, then the MGF of the linear combination $Y=\sum\limits_{i=1}^{n}a_ix_i$ is $\prod \limits_{i=1}^{n} M_{x_i} (a_it)$.
\end{thm}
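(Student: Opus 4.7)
The plan is to unwind the definition of the MGF of $Y=\sum_{i=1}^{n}a_i x_i$ and reduce it, via the independence hypothesis, to a product of one-variable expectations. First I would write
\[
M_Y(t) \;=\; \mathbb{E}\!\left[e^{tY}\right] \;=\; \mathbb{E}\!\left[e^{t\sum_{i=1}^{n} a_i x_i}\right],
\]
and use the functional identity $e^{u+v}=e^{u}e^{v}$ to split the exponential of the sum into a product:
\[
\mathbb{E}\!\left[e^{t\sum_{i=1}^{n} a_i x_i}\right] \;=\; \mathbb{E}\!\left[\prod_{i=1}^{n} e^{(a_i t)\, x_i}\right].
\]
Since $x_1,\dots,x_n$ are independent, the random variables $e^{(a_i t)\, x_i}$ are also independent (each being a measurable function of a single $x_i$), so the expectation of the product factors into the product of expectations. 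Each factor is by definition $M_{x_i}(a_i t)$, which yields the claimed identity $M_Y(t)=\prod_{i=1}^{n} M_{x_i}(a_i t)$.

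The only genuinely non-routine step is justifying the factorization of the expectation of the product under independence; this is the standard fact that $\mathbb{E}[\prod_i g_i(x_i)] = \prod_i \mathbb{E}[g_i(x_i)]$ for independent $x_i$ and integrable $g_i(x_i)$, which I would either cite or sketch via Fubini on the product measure. Beyond that, I would briefly note the mild domain caveat: the identity holds for those $t$ at which $M_{x_i}(a_i t)$ exists for every $i$ simultaneously, which is not an issue here since Theorem~\ref{thm: RPLap mech} restricts attention to distributions with existing MGFs. I do not anticipate any real obstacle; this is essentially a one-line computation once independence is invoked, and the proposal's purpose is mainly to lay out the algebraic manipulation cleanly so that it can be reused in the subsequent derivation of the randomized-DP Laplace output distribution.
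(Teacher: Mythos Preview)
Your proposal is correct and is exactly the standard textbook argument for this classical identity. The paper itself does not supply a proof for Theorem~\ref{thm:lin}; it simply states the result as a known fact and immediately uses it, so there is nothing to compare against beyond noting that your derivation is the expected one.
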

 Thus, our search space is given as all possible  linear combinations of a set of independent RVs with existing MGF (Section~\ref{pdffind} demonstrates on how to choose the set of independent RVs).
 \begin{figure*}[!t]
\includegraphics[width=0.9\linewidth,viewport= 10 170 800 600,clip]{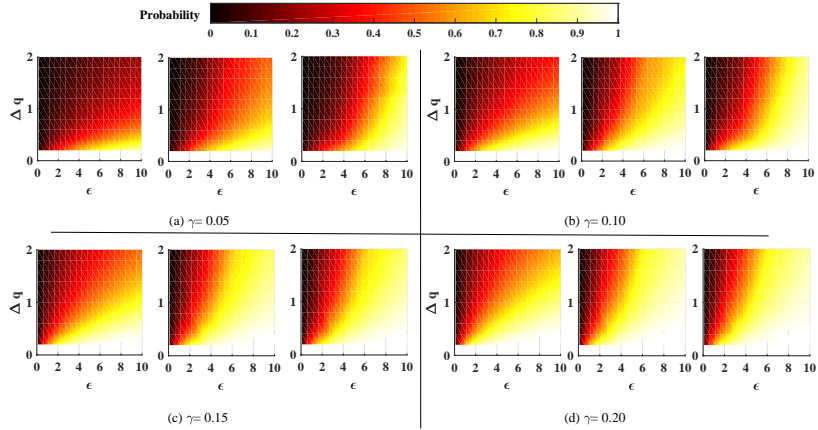}
\centering
\caption{Illustrating the performance of Randomized DP Laplace mechanism. The left, middle and right figures in each of the configurations are respectively the usefulness for the Laplace mechanism, the Randomized DP Laplace mechanism and the optimal noise, i.e., Laplace distribution in high privacy regime and Staircase shape distribution in low privacy regime}
\label{fig:both}
\end{figure*}
\begin{figure}[ht]
\includegraphics[width=0.75\linewidth,viewport= 60 -10 712 620,clip]{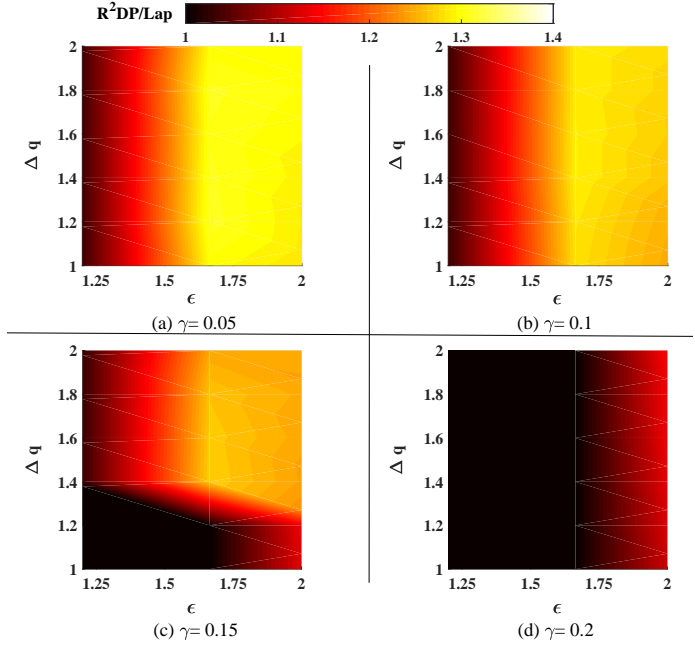}
\centering
\caption{Validating the effectiveness of Randomized DP for small $\epsilon$}
\label{fig:comp}
\end{figure}

\begin{figure}[ht]
\includegraphics[width=0.7\linewidth,viewport= 30 90 730 505,clip]{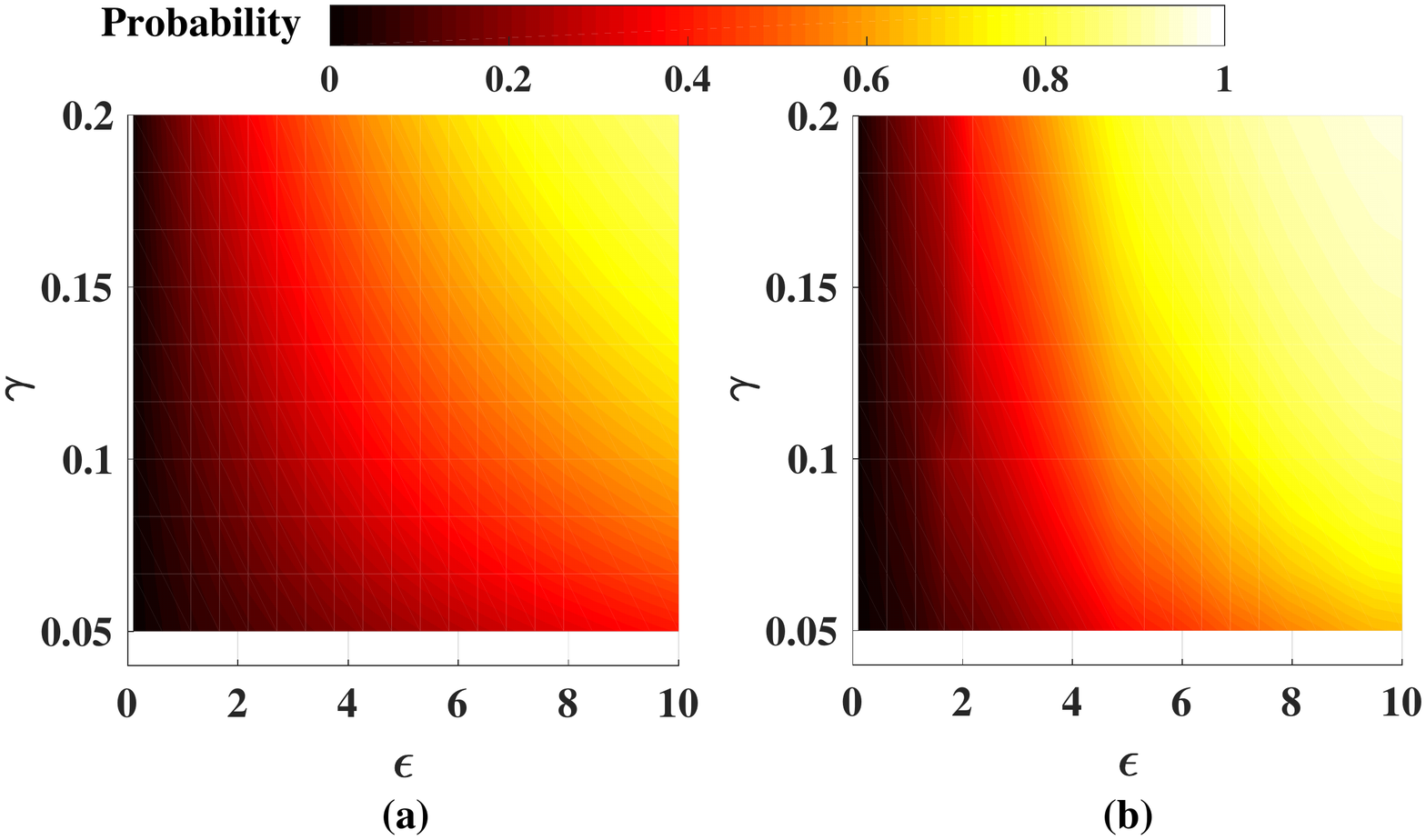}
\centering
\caption{Comparing the performance of (a) the baseline Laplace mechanisms, for count queries and (b) Randomized DP, when varying $\gamma$ and $\epsilon$}
\label{fig:bothgam}
\end{figure}
\subsubsection{Determining the Optimal PDF}
After giving the differential privacy guarantee and characterizing the utility of the Randomized DP Laplace mechanism (see Section~\ref{sec5}), we will show that the Randomized DP framework can unify two parallel concepts, i.e., privacy and utility, into one optimization problem defined over the defined search space of RVs.
\subsection{Numerical Analysis}
 We now present numerical results to fine tune the Randomized DP parameters under more general settings. In particular, Figure~\ref{fig:both} depicts the corresponding performance of Laplace mechanism, Randomized DP Laplace mechanism and Staircase mechanism. Figure~\ref{fig:both} clearly demonstrates the fact that Randomized DP can achieve both objectives mentioned earlier, i.e., approaching the optimal mechanism and improving Laplace mechanisms for larger $\epsilon$. We now analyze the improvements provided by Randomized DP under two different settings. First, we discuss the performance of Randomized DP under a stronger privacy guarantee (e.g., $\epsilon< 2$). Next, we study the improvement for
counting queries ($\Delta q = 1$) while varying the error bound $\gamma$.

\subsection{Privacy Analysis}
 We now show the Randomized DP Laplace mechanism provides differential
 privacy guarantee. Using theorem~\ref{thm: RPLap mech}, the DP bound is 
\begin{eqnarray*}
\label{DPlapexmp1}
 &\hspace{-.4cm} e^{\epsilon}= \max\limits_{\forall S \in \R} \left\{\frac{-M_{\frac{1}{b}}(-|x-q(d)|)|_{S_{\geq q(d)}}+M_{\frac{1}{b}}(-|x-q(d)|)|_{S_{< q(d)}}}{-M_{\frac{1}{b}}(-|x-q(d')|)|_{S_{\geq q(d')}}+M_{\frac{1}{b}}(-|x-q(d')|)|_{S_{< q(d')}}}\right\} 
\end{eqnarray*}
 Hence, the value of $e^{\epsilon}$ only depends on the distribution of reciprocal of the scale parameter $b$, i.e., $f_{\frac{1}{b}}$. Moreover, an MGF is positive and log-convex where the latter property is desirable in defining various natural logarithm upper-bounds, e.g., DP bound.
In the following theorem, we demonstrate the fact that our MGF-based formula for the probability $\Prob(\{q(d)+Lap(b)\}\in S)$ in Equation~\ref{lRandomized DPlap1} can be easily applied to calculate the \DP guarantee.

\begin{thm}
\label{simple DP}
The Randomized DP mechanism $\mathcal M_q(d,b)$ is 
\begin{equation}
\label{simple DPeq}
    \ln \left[ \cfrac{ \mathbb E(\frac{1}{b})} {\diff{M_{\frac{1}{b}}(t)}{t}|_{t=- \Delta q}} \right]\text{-differentially private.}
\end{equation}
\end{thm}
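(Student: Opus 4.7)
The plan is to first write down the mixture PDF of $\mathcal M_q(d)$ explicitly and recognize that it can be expressed through the derivative of the MGF of $1/b$. Letting $u = 1/b$ and $s = |x - q(d)|$, the density at $x$ is
\begin{equation*}
p(x\mid d) \;=\; \int_{0}^{\infty} \tfrac{u}{2}\, e^{-u s}\, f_{u}(u)\, du \;=\; \tfrac{1}{2}\,\diff{M_{u}(t)}{t}\Big|_{t=-s},
\end{equation*}
since differentiating $M_u(t)=\mathbb E[e^{tu}]$ under the integral sign brings down a factor of $u$. This identifies the ``MGF-based formula'' referenced just before the theorem and reduces the problem to studying the function $g(s) := \diff{M_{u}(t)}{t}\big|_{t=-s}$.

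Next I would reduce $(\epsilon,0)$-DP to a pointwise likelihood-ratio bound and apply the triangle inequality $\bigl||x-q(d)|-|x-q(d')|\bigr| \leq \Delta q$. Since $g(s)=\mathbb E[u e^{-us}]$ is strictly decreasing in $s$, the worst case in the likelihood ratio is attained when $|x-q(d')|$ is as large as possible, i.e.\ equal to $|x-q(d)|+\Delta q$. Thus the privacy loss is controlled by
\begin{equation*}
R(s) \;:=\; \frac{g(s)}{g(s+\Delta q)}, \qquad s \geq 0,
\end{equation*}
and it suffices to show $\max_{s \geq 0} R(s) = R(0) = \mathbb E[u]/g(\Delta q)$, which is precisely the quantity inside the logarithm.

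The crux is the monotonicity of $R(s)$. Here is where I expect the main difficulty, and here is the idea I would push: the function $g(s)=\mathbb E[u]\cdot\widetilde M(-s)$ is, up to the positive constant $\mathbb E[u]$, the MGF of the size-biased distribution with density $u f_{u}(u)/\mathbb E[u]$, which is a bona fide probability density on $[0,\infty)$. Hence $g$ is itself log-convex (MGFs are log-convex, a property already invoked in Theorem~\ref{thm: RPLap mech}). Consequently $(\log g)'$ is non-decreasing, which yields
\begin{equation*}
\diff{}{s}\log R(s) \;=\; -(\log g)'(s) + (\log g)'(s+\Delta q) \;\geq\; 0 \text{ in terms of the magnitude...}
\end{equation*}
Care must be taken with signs: writing $h(t)=\log g(-t)$ and using convexity of $h$ (which follows from log-convexity of $g$ composed with the affine map $s\mapsto -s$, noting the second derivative is unchanged), one obtains that $h(-s)-h(-s-\Delta q)$ is non-increasing in $s$, i.e.\ $\log R(s)$ is non-increasing, so $R$ is maximized at $s=0$.

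Finally, substituting $s=0$ gives $R(0) = g(0)/g(\Delta q) = \mathbb E(1/b)\big/\diff{M_{1/b}(t)}{t}\big|_{t=-\Delta q}$, and taking logarithms delivers the advertised $\epsilon$. The main obstacle, as noted, is verifying the log-convexity of $g$ rigorously and tracking signs when translating log-convexity of an MGF into monotonicity of $R$; all remaining steps are routine manipulations of the mixture density and the triangle inequality.
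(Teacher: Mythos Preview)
Your proposal is correct and matches the paper's intended route: the paper does not write out a proof here (it defers to the appendix of~\cite{lamport93}), but the surrounding discussion and Theorem~\ref{thm: RPLap mech} make clear that the argument is exactly ``express the mixture density through the MGF of $1/b$ and then use log-convexity to control the likelihood ratio,'' which is what you do; your size-biasing remark is simply a clean way to see that $g(s)=M'_{1/b}(-s)$ is, up to the constant $\mathbb E(1/b)$, itself an MGF and hence log-convex. The only cleanup needed is the sign bookkeeping you already flag: with $\psi(s)=\log g(s)$ convex one has directly $\tfrac{d}{ds}\log R(s)=\psi'(s)-\psi'(s+\Delta q)\le 0$, so $R$ is non-increasing with maximum $R(0)=\mathbb E(1/b)\big/M'_{1/b}(-\Delta q)$, and the detour through $h(t)=\log g(-t)$ is unnecessary.
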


Finally, Theorem~\ref{thm:lin} can be directly applied to calculate the \DP guarantee of any RV from our defined search space (all linear combinations of a set of independent RVs with known MGFs).
\begin{coro}
[\DP of combined PDFs]
\label{thm:mgffin}
If $x_1, x_2, \cdots, x_n$ are $n$ independent random variables with respective MGFs $M_{x_i}(t)=\mathbb E (e^{t x_i})$ for $i = 1, 2,\cdots, n$, then the Randomized DP mechanism $\mathcal M_q(d,b)$ where $\frac{1}{b}$ is defined as the linear combination $\frac{1}{b}=\sum\limits_{i=1}^{n}a_ix_i$ is
 \vspace{-0.25cm}
 \begin{eqnarray}
 \scriptsize
    \label{mgfdp}
  ln\left[\cfrac{\sum \limits_{j=1}^{n} a_j\cdot E_{x_j}(\frac{1}{b})}{\sum \limits_{j=1}^{n} a_j\cdot M'_{x_j}(-a_j\cdot \Delta q) \cdot \prod \limits_{\substack{i=1 \\ i\neq j}}^{n} M_{x_i} (-a_i\cdot \Delta q)}\right]
    \end{eqnarray}
-differentially private.
\end{coro}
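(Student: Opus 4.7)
The plan is to combine Theorem \ref{simple DP} (which gives the DP guarantee for a randomized-scale Laplace mechanism purely in terms of the MGF $M_{1/b}$ of the reciprocal scale parameter) with Theorem \ref{thm:lin} (which gives the MGF of a linear combination of independent RVs as a product of rescaled MGFs). The corollary is really just the instantiation of the former when $1/b$ has the particular form $\sum_i a_i x_i$.

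First I would invoke Theorem \ref{simple DP} to assert that the privacy guarantee is $\ln\bigl[\mathbb E(1/b)\,/\,M'_{1/b}(t)\big|_{t=-\Delta q}\bigr]$. Next I would compute the numerator by linearity of expectation: since $1/b=\sum_{j=1}^{n}a_j x_j$, we have $\mathbb E(1/b)=\sum_{j=1}^{n}a_j\,\mathbb E_{x_j}(x_j)$, matching the numerator displayed in \eqref{mgfdp} (under the natural reading that $E_{x_j}(1/b)$ abbreviates the contribution of $x_j$ to $\mathbb E(1/b)$).

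For the denominator I would apply Theorem \ref{thm:lin} to write $M_{1/b}(t)=\prod_{i=1}^{n}M_{x_i}(a_i t)$, then differentiate with respect to $t$ using the product rule. Because each factor depends on $t$, the derivative at a general point is
\[
M'_{1/b}(t)=\sum_{j=1}^{n}a_j\,M'_{x_j}(a_j t)\prod_{\substack{i=1\\ i\neq j}}^{n}M_{x_i}(a_i t),
\]
where the factor $a_j$ arises from the chain rule applied to $M_{x_j}(a_j t)$. Evaluating at $t=-\Delta q$ yields exactly the denominator of \eqref{mgfdp}, so substituting both pieces into the expression from Theorem \ref{simple DP} completes the proof.

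The main step to be careful about is the product-rule differentiation, where one has to keep the chain-rule factor $a_j$ attached to the differentiated factor and not the untouched ones; apart from this bookkeeping the argument is essentially a mechanical substitution, so there is no substantive analytic obstacle beyond the existence of the MGFs (and hence of their derivatives at $-a_j\Delta q$) on a neighborhood of the origin, which is already assumed in the hypotheses of Theorems \ref{thm: RPLap mech} and \ref{thm:lin}.
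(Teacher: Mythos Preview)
Your proposal is correct and follows exactly the route the paper indicates: the corollary is presented there as the immediate combination of Theorem~\ref{simple DP} with Theorem~\ref{thm:lin}, obtained by writing $M_{1/b}(t)=\prod_i M_{x_i}(a_i t)$ and differentiating via the product and chain rules, together with linearity of expectation for the numerator. The paper does not spell out the product-rule computation you give, so your version is in fact slightly more explicit, but the approach is the same.
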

Therefore, we have established a search space of probability distributions with a universal formulation for their \DP guarantees, which is the key enabler for the universality of Randomized DP. Next, we characterize the utility of Randomized DP Laplace mechanisms.  
\subsection{Characterizing the Utility}
We now characterize the utility of the Randomized DP Laplace mechanism. To make concrete discussions, we first focus our discussion on the usefulness metric (see Section~\ref{sec:back}), then discuss how a similar logic applies to other metrics.
Denote by $U(\epsilon, \Delta q, \gamma)$ the usefulness of an Randomized DP Laplace mechanism for all $ \epsilon>0$, sensitivity $\Delta q$ and error bound $\gamma$. The optimal usefulness is then given as the answer of the following optimization problem over the search space of PDFs. 
\vspace{-0.25cm}
\begin{eqnarray}
\label{multi-obj1}
&\hspace{-0.5cm} \max\limits_{f_{\frac{1}{b}}\in F} \big\{U(\epsilon, \Delta q, \gamma) \big\}=\max\limits_{f_{\frac{1}{b}}\in F} \bigg \{\frac{1}{2} \cdot \Big[-M_{\frac{1}{b}}(-|x-q(d)|)|_{q(d)}^{q(d)+\gamma}\nonumber\\
&\hspace{3.5cm}+M_{\frac{1}{b}}(-|x-q(d)|)|_{q(d)-\gamma}^{q(d)}\Big] \bigg \}, \nonumber \\
& \text{subject to     } \ \ \ \epsilon=\ln \left[ \cfrac{ \mathbb E(\frac{1}{b})} {\diff{M_{\frac{1}{b}}(t)}{t}|_{t=- \Delta q}} \right]  \nonumber
\end{eqnarray}
Note that $\epsilon$ and $\Delta q$ do not directly impact the
usefulness but they do so indirectly through the \DP
constraint. Furthermore, as shown in Theorem~\ref{simple DP}, the \DP guarantee $\epsilon$ over the established search space $\mathcal{F}$ is a unique function of the parameters of the second fold distribution. 
\begin{coro}
\label{co1}
Denote by  $u$, the set of parameters for a probability distribution $f_{\frac{1}{b}}$, and by $M_{f(u)}$ its MGF. Then, the optimal usefulness of an Randomized DP mechanism utilizing $f_{\frac{1}{b}}$, at each triplet $(\epsilon, \Delta q, \gamma)$ is
\begin{eqnarray}
\small
    \label{gen:conddelt}
       &\hspace{-0.5cm}  U_f(\epsilon, \Delta q, \gamma)=\max\limits_{u\in \mathbb{R}^{|u|}} \bigg \{\frac{1}{2} \cdot \Big[-M_{f(u)}(-|x-q(d)|)|_{q(d)}^{q(d)+\gamma}\nonumber\\
&\hspace{3.5cm}+M_{f(u)}(-|x-q(d)|)|_{q(d)-\gamma}^{q(d)}\Big] \bigg \}, \nonumber\\
& \text{subject to     } \ \ \ \epsilon=\ln \left[ \cfrac{ \mathbb E(\frac{1}{b})} {\diff{M_{\frac{1}{b}}(t)}{t}|_{t=- \Delta q}} \right]  \nonumber
    \end{eqnarray}
\end{coro}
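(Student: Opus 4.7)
The plan is straightforward reparametrization. The optimization problem immediately preceding the corollary is written as a variational problem over $f_{\frac{1}{b}} \in F$, and I would show that this problem becomes a finite-dimensional parameter-search problem once we restrict attention to a fixed parametric family.

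First, I would observe that both the objective and the constraint in the preceding display depend on the candidate density $f_{\frac{1}{b}}$ only through its MGF. The objective is a difference of evaluations of $M_{\frac{1}{b}}$ on the two halves of the interval $[q(d)-\gamma,\,q(d)+\gamma]$, and the constraint, as supplied by Theorem~\ref{simple DP}, uses $\mathbb E(\tfrac{1}{b}) = M'_{\frac{1}{b}}(0)$ together with $\diff{M_{\frac{1}{b}}(t)}{t}\big|_{t=-\Delta q}$ — both are functionals of the MGF alone. Therefore, any restriction of the search space that preserves access to the MGF preserves the entire problem.

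Next, I would fix a parametric family $f$ identified by the map $u \mapsto f(u) \in F$, whose MGF is $M_{f(u)}$. Substituting $M_{f(u)}$ into both the objective and the constraint converts the infinite-dimensional search over $f_{\frac{1}{b}} \in F$ into a finite-dimensional search over $u \in \mathbb{R}^{|u|}$. Taking the max over $u$ yields exactly the expression for $U_f(\epsilon, \Delta q, \gamma)$ stated in the corollary. The DP constraint from Theorem~\ref{simple DP} is unchanged in form because it is already stated purely in MGF language.

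The main obstacle, if any, is purely technical: one must check that the parameter domain is chosen so that the induced MGF is well-defined at the required evaluation points (notably $t=-\Delta q$ and on the bounded interval $[-\gamma,\gamma]$ after the change of variable $x-q(d)$), and that the feasibility set $\{u : \ln[\mathbb E(\tfrac{1}{b})/M'_{f(u)}(-\Delta q)]=\epsilon\}$ is non-empty for the triplets of interest. For the standard families whose MGFs are entire or have a sufficiently large domain of convergence — and, by Theorem~\ref{thm:lin}, for any linear combination of such families — both conditions hold, so the supremum is attained and the corollary follows.
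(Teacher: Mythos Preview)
Your proposal is correct and matches the paper's treatment: the paper states Corollary~\ref{co1} without a formal proof, deriving it directly from the preceding variational problem over $f_{\frac{1}{b}}\in F$ together with the remark that, by Theorem~\ref{simple DP}, the $\epsilon$-DP constraint is a function of the second-fold distribution's parameters. Your reparametrization argument is exactly this observation made explicit, with the added (and appropriate) caveat about MGF domains and feasibility that the paper leaves implicit.
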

However, MGFs are positive and log-convex, with $M(0)=1$ and hence,  $U_f(\epsilon, \Delta q, \gamma)=1-\min\limits_{u\in \mathbb{R}^{|u|}} M_{f(u)}(-\gamma)$. Therefore, for usefulness metric, the best distribution for $\epsilon$ is the one with minimum MGF evaluated at $\gamma$. In particular, for a set of privacy and utility parameters, one can find the optimal point using the \textit{Lagrange multiplier} method. i.e.,
\begin{eqnarray}
    \label{lagrange1}
 \mathcal{L} (u,\lambda)= M_{f(u)}(-\gamma)+ \lambda \cdot (\ln \left[ \cfrac{ \mathbb E(\frac{1}{b})} {\diff{M_{\frac{1}{b}}(t)}{t}|_{t=- \Delta q}} \right]-\epsilon) 
    \end{eqnarray}
   Next, under the DP guarantee of several probability distributions, we will apply Equation~\ref{lagrange1} to find the optimal trade-off.  
    
\subhead{Utility under Other Metrics}
We derive the utility of the Randomized DP Laplace mechanism under some well-known utility metrics. Due to space limitation, we present only the final results in Table~\ref{tablemetrics}.

\begin{table}[ht]
\caption{Utility of the Randomized DP (Laplace) under different metrics}
\centering
\begin{adjustbox}{width=0.5\textwidth,center}

\begin{tabular}{|c|c|c|c|}
\hline
\bf{$\ell_1$} & \bf{ $\ell_2$}&  Entropy&  Usefulness
\\
 \hline 
$\int\limits_{0}^\infty M_{\frac{1}{b}}(-x) dx$ & $\sqrt{2\iint\limits_0^\infty  M_{\frac{1}{b}}(-u) du dx}$& $\displaystyle \int_{0}^\infty -M'_{\frac{1}{b}}(-x) \cdot \ln M'_{\frac{1}{b}}(-x)dx $ & $1-M_{\frac{1}{b}}(-\gamma)$\\
\hline
\end{tabular}
\end{adjustbox}
	\label{tablemetrics}
\end{table}
The results in Table~\ref{tablemetrics} can be easily applied to optimize each measure in different applications. 


\subhead{Necessary Condition on Selected Distributions}
Not all second fold probability distributions can boost the utility of the baseline Laplace mechanism. Accordingly, in the following theorem, we derive a necessary condition on the \DP guarantee of an Randomized DP Laplace mechanism to boost the utility of the baseline Laplace mechanism (refer to Appendix in~\cite{lamport93} for the proof). Using this necessary condition, we can easily filter out those probability distributions that cannot deliver any utility improvement.  
\begin{thm}
\label{thm: RPLap mechut}
The utility of an Randomized DP Laplace mechanism with $\epsilon\geq \ln \Big[ \mathbb E_{\frac{1}{b}} \big(e^{\epsilon(b)} \big ) \Big]$ is always upper bounded by the utility of the $\epsilon$-differentially private baseline Laplace mechanism. Equivalently, for an Randomized DP Laplace mechanism to boost the utility, the following relation is necessarily true.
\begin{equation}
\label{necc}
e^\epsilon=\frac{\mathbb E(\frac{1}{b})}{M'_{\frac{1}{b}}(- \Delta q)} < M_{\frac{1}{b}} (\Delta q)
\end{equation}
\end{thm}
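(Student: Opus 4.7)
The plan is to derive both assertions of the theorem by two applications of Jensen's inequality. The first step is to convert the hypothesis into a usable scalar form: by Theorem~\ref{thm: Lap mech}, a single Laplace mechanism with scale $b$ is $\epsilon(b)=\Delta q/b$-differentially private, so
\[
\mathbb E\!\left[e^{\epsilon(b)}\right] \;=\; \mathbb E\!\left[e^{\Delta q/b}\right] \;=\; M_{\frac{1}{b}}(\Delta q),
\]
and the assumption $\epsilon\geq \ln \mathbb E[e^{\epsilon(b)}]$ reads simply $e^{\epsilon}\geq M_{\frac{1}{b}}(\Delta q)$.

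Next, I would apply Jensen's inequality to the convex map $y\mapsto e^{\Delta q\, y}$, which gives $M_{\frac{1}{b}}(\Delta q)\geq e^{\Delta q\,\mathbb E[1/b]}$. Chaining with the hypothesis produces the scalar bound $\mathbb E[1/b]\leq \epsilon/\Delta q$. A second application of Jensen's, now to the convex map $y\mapsto e^{-\gamma y}$, yields
\[
M_{\frac{1}{b}}(-\gamma) \;=\; \mathbb E\!\left[e^{-\gamma/b}\right] \;\geq\; e^{-\gamma\,\mathbb E[1/b]} \;\geq\; e^{-\gamma\epsilon/\Delta q}.
\]
By Table~\ref{tablemetrics} the Randomized DP Laplace usefulness equals $1-M_{\frac{1}{b}}(-\gamma)$, while the baseline $\epsilon$-DP Laplace (scale $b_{0}=\Delta q/\epsilon$) has usefulness $1-e^{-\gamma\epsilon/\Delta q}$ because $\Prob(|w|\geq \gamma)=e^{-\gamma/b_{0}}$ as noted in the paper. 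Subtracting from $1$ flips the inequality and gives the advertised utility upper bound.

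For the equivalent necessary condition~\ref{necc}, the contrapositive of what was just proved says that any utility improvement forces the strict inequality $e^\epsilon<M_{\frac{1}{b}}(\Delta q)$. Substituting the closed-form DP guarantee $e^\epsilon = \mathbb E(1/b)/M'_{\frac{1}{b}}(-\Delta q)$ from Theorem~\ref{simple DP} into this strict inequality produces exactly~\ref{necc}. The main obstacle is conceptual rather than computational: noticing that two passes of Jensen's inequality, threaded through the scalar $\mathbb E[1/b]$, are what convert an MGF comparison at the argument $+\Delta q$ (controlled by the privacy hypothesis) into the required MGF lower bound at $-\gamma$ (which governs the usefulness). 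Beyond that observation, only elementary convexity and the algebraic identities supplied by Theorems~\ref{thm: Lap mech} and \ref{simple DP} are needed.
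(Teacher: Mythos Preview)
Your argument is correct. The paper itself does not include a proof of this theorem in the text; it simply says ``refer to Appendix in~\cite{lamport93} for the proof,'' so there is no in-paper derivation to compare against. Evaluated on its own, your two-step Jensen approach is sound: the identification $\mathbb E[e^{\epsilon(b)}]=M_{1/b}(\Delta q)$ converts the hypothesis into $e^{\epsilon}\geq M_{1/b}(\Delta q)$; the first Jensen step (for $y\mapsto e^{\Delta q\,y}$) gives $\mathbb E[1/b]\leq \epsilon/\Delta q$; the second (for $y\mapsto e^{-\gamma y}$) together with this bound yields $M_{1/b}(-\gamma)\geq e^{-\gamma\epsilon/\Delta q}$, which is exactly the usefulness comparison after invoking Table~\ref{tablemetrics} and the Laplace tail $\Prob(|w|\geq\gamma)=e^{-\gamma/b_{0}}$. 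The contrapositive then delivers the strict inequality, and substituting Theorem~\ref{simple DP} produces~\eqref{necc}. One small remark: your argument, like the surrounding discussion in the paper, is written for the usefulness metric specifically; the theorem's phrasing ``the utility'' is generic, so if you want full coverage you would note that the same Jensen-on-$\mathbb E[1/b]$ trick carries over to the other entries of Table~\ref{tablemetrics} (each of which is monotone in $M_{1/b}(-\cdot)$ or its derivative).
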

We note that $\epsilon= \ln \Big[ \mathbb E_{\frac{1}{b}} \big(e^{\epsilon(b)} \big ) \Big]$ provides a tight upper bound as it gives the overall $e^\epsilon$ of an Randomized DP Laplace mechanism as the average of \DP leakages.


\subsection{Finding Utility-Maximizing Probability Distributions} 
\label{pdffind}
We now examine a set of well-known
probability distributions to establish the required search space by selecting those offer a significantly improved
$\epsilon$ compared with the bound given in Theorem~\ref{thm: RPLap mechut}. Promisingly, our analytic evaluations for
\emph{three} of these distributions, i.e., Gamma, Uniform and
truncated Gaussian distributions demonstrates such a payoff. Finally, we note that those chosen distributions are general enough to
cover a large family of other probability distributions.
For instance, since Exponential distribution, Erlang distribution,
and Chi-squared distribution are special cases
of Gamma distribution, we will only consider Gamma
distribution.

\subsubsection{Discrete Probability Distributions}

First, we consider two different mixture Laplace distributions that
can be applied for constructing Randomized DP Laplace mechanisms with discrete
probability distribution $f_b$.

\vspace{0.05in}

(1) \textbf{Degenerate distribution.} A degenerate distribution is a
probability distribution in a (discrete or continuous) space with
support only in a space of lower
dimension. If the degenerate
distribution is uni-variate (involving only a single random variable)
it will be a deterministic distribution and takes only a single
value. Therefore, the degenerate distribution is identical to the
baseline Laplace mechanism as it also assigns the mechanism one single
scale parameter $b_0$. Specifically, the probability mass function of
the uni-variate degenerate distribution is:
   \[ f_{\delta,k_0}(x)= \begin{cases} 
      1 & x= k_0  \\
      0 & x\neq k_0 
   \end{cases}
\]
 The MGF for the degenerate distribution $\delta_{k_0}$ is given by
 $M_k(t)=e^{t\cdot k_0}$. Using
 Equation~\ref{simple DPeq}, Theorem~\ref{degenerateDP} gives the
 same DP guarantee as the baseline Laplace mechanism.
 
 \begin{thm}
 \label{degenerateDP}
 The Randomized DP Laplace mechanism $M_q(d,\epsilon)$, $\epsilon \sim f_{\delta,\frac{1}{b_0}}(\epsilon)$, is $\frac{\Delta q}{b_0}$-differentially private. 
 \end{thm}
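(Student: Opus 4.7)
The plan is to invoke Theorem~\ref{simple DP} and specialize the general formula
\[
\epsilon \;=\; \ln\!\left[\frac{\mathbb{E}(\tfrac{1}{b})}{\diff{M_{\frac{1}{b}}(t)}{t}\big|_{t=-\Delta q}}\right]
\]
to the degenerate second-fold distribution $f_{\delta,\frac{1}{b_0}}$, then check that the resulting expression collapses to the baseline Laplace bound $\Delta q / b_0$. Conceptually, the statement simply says that when the random scale parameter is actually not random (it equals $1/b_0$ with probability one), the Randomized DP Laplace mechanism must reduce to the ordinary Laplace mechanism, so the derivation serves mostly as a sanity check that the general machinery is internally consistent with this limiting case.

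First, I would record the two ingredients required by Theorem~\ref{simple DP} for the degenerate law $\frac{1}{b}\equiv \frac{1}{b_0}$: the expectation $\mathbb{E}(\tfrac{1}{b}) = \tfrac{1}{b_0}$, and the MGF $M_{\frac{1}{b}}(t) = e^{t/b_0}$ as stated in the paragraph preceding the theorem. Differentiating the MGF gives $M'_{\frac{1}{b}}(t) = \tfrac{1}{b_0}\,e^{t/b_0}$, so in particular
\[
\diff{M_{\frac{1}{b}}(t)}{t}\Big|_{t=-\Delta q} \;=\; \frac{1}{b_0}\,e^{-\Delta q / b_0}.
\]

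Substituting into the formula of Theorem~\ref{simple DP} yields
\[
\epsilon \;=\; \ln\!\left[\frac{1/b_0}{(1/b_0)\,e^{-\Delta q/b_0}}\right] \;=\; \ln\!\bigl[e^{\Delta q/b_0}\bigr] \;=\; \frac{\Delta q}{b_0},
\]
which is exactly the claimed differential privacy guarantee. The only point that merits care — and which I would flag explicitly — is to confirm that Theorem~\ref{simple DP} is in fact applicable here: the MGF $e^{t/b_0}$ exists on all of $\mathbb{R}$ and is smooth, so the differentiation step is legal. There is no genuine obstacle in this proof; it is a direct substitution, and the main role of the argument is to exhibit the degenerate case as a fixed point of the Randomized DP construction, justifying the paper's claim that the degenerate distribution is \emph{identical} to the baseline Laplace mechanism.
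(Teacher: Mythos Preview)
Your proposal is correct and follows exactly the paper's approach: the paper states the MGF $M_k(t)=e^{t k_0}$ for the degenerate distribution and then says ``Using Equation~\ref{simple DPeq}, Theorem~\ref{degenerateDP} gives the same DP guarantee as the baseline Laplace mechanism,'' which is precisely the substitution you carry out in detail. Your write-up simply makes the arithmetic explicit.
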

 
Obviously, this distribution does not improve the bound in
Theorem~\ref{thm: RPLap mechut} but shows the
soundness of our findings.

\vspace{0.05in}

(2) \textbf{Bernoulli distribution.}
The probability mass function of this distribution, over possible outcomes $k$, is
\[ f_{B}(k;p)=\begin{cases}p&{\text{if }}k=1,\\q=1-p&{\text{if }}k=0.\end{cases}\]
Note that the binary outcomes $k=0$ and $k=1$ can be mapped to any two outcomes $X_0$ and $X_1$, respectively. Therefore, we consider the following Bernoulli outcomes
\[ f_{B,X_0, X_1}(X;p)=\begin{cases}p&{\text{if }}X=X_1,\\q=1-p&{\text{if }}X=X_0.\end{cases}\]
The MGF for Bernoulli distribution $f_{B,X_0, X_1}(X;p)$ is  $M_X(t)=p\cdot e^{t\cdot X_0}+(1-p)\cdot e^{t\cdot X_1}$. We now derive the precise \DP guarantee of an Randomized DP Laplace mechanism with its scale parameter randomized according to a Bernoulli distribution. 
\begin{thm}
\label{bernoulidp}
 The Randomized DP Laplace mechanism $M_q(d,\epsilon)$, $\epsilon\sim f_{B,\frac{1}{b_0},\frac{1}{b_1}}(\epsilon;p)$, satisfies $\ln [p\cdot e^{\frac{\Delta q}{b_0}}+(1-p)\cdot e^{\frac{\Delta q}{b_1}}]$-DP. 
\end{thm}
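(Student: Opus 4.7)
The plan is to specialize Theorem~\ref{simple DP} to the Bernoulli second-fold distribution, since that theorem already supplies a universal closed-form DP guarantee in terms of the MGF of $1/b$ and its derivative at $-\Delta q$. The Bernoulli MGF has been written down in the paragraph immediately preceding the claim, so the proof essentially reduces to a substitution step.

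First I would specialize the MGF: with $1/b$ taking the two values $1/b_0$ and $1/b_1$ with probabilities $p$ and $1-p$, we have $M_{1/b}(t) = p\,e^{t/b_0} + (1-p)\,e^{t/b_1}$. From this I would read off $\mathbb{E}(1/b) = M'_{1/b}(0) = p/b_0 + (1-p)/b_1$, and by differentiating once more obtain $M'_{1/b}(-\Delta q) = (p/b_0)\,e^{-\Delta q/b_0} + ((1-p)/b_1)\,e^{-\Delta q/b_1}$. Feeding these two quantities into the ratio prescribed by Theorem~\ref{simple DP} gives an explicit formula for $e^\epsilon$ in terms of $p$, $b_0$, $b_1$, and $\Delta q$.

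The step I expect to be the main obstacle is simplifying the resulting ratio into the clean form $p\,e^{\Delta q/b_0} + (1-p)\,e^{\Delta q/b_1} = M_{1/b}(\Delta q)$. The two expressions do not coincide term by term, so one must either invoke the convexity of $1/x$ (an arithmetic-harmonic step applied to the weighted average in the denominator, with weights proportional to $p/b_0$ and $(1-p)/b_1$) to upper bound $e^\epsilon$ by $M_{1/b}(\Delta q)$, or directly appeal to the necessary condition of Theorem~\ref{thm: RPLap mechut}, which identifies $\ln \mathbb{E}_{1/b}[e^{\epsilon(b)}] = \ln M_{1/b}(\Delta q)$ as a valid DP upper bound for the mixture. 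Either route yields the stated guarantee, and the final expression $\ln M_{1/b}(\Delta q)$ is the natural MGF-based answer consistent with the rest of the paper's framework.
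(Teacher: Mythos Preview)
Your substitution of the Bernoulli MGF into Theorem~\ref{simple DP} is carried out correctly, and you are right that the resulting ratio
\[
\frac{p/b_0+(1-p)/b_1}{(p/b_0)\,e^{-\Delta q/b_0}+((1-p)/b_1)\,e^{-\Delta q/b_1}}
\]
does not coincide with $M_{1/b}(\Delta q)=p\,e^{\Delta q/b_0}+(1-p)\,e^{\Delta q/b_1}$. The gap is in your proposed bridge. The AM--HM/convexity step you sketch, with weights proportional to $p/b_0$ and $(1-p)/b_1$, upper-bounds the ratio by a weighted average of the $e^{\Delta q/b_i}$ using \emph{those} weights, not the weights $p$ and $1-p$; the two averages differ, and in fact the Theorem~\ref{simple DP} ratio is generically \emph{larger} than $M_{1/b}(\Delta q)$, not smaller. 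Concretely, for $p=\tfrac12$, $b_0=1$, $b_1=2$, $\Delta q=1$ one obtains a ratio of about $2.235$ from Theorem~\ref{simple DP} while $M_{1/b}(1)\approx 2.184$, so the inequality you need simply fails. Your alternative appeal to Theorem~\ref{thm: RPLap mechut} is also misplaced: that theorem gives a \emph{necessary} condition for utility improvement and nowhere establishes $\ln M_{1/b}(\Delta q)$ as a valid DP guarantee for the mixture.

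The paper itself provides no argument beyond writing down the Bernoulli MGF and stating the result, so its implicit route is the same substitution you perform; there is no separate proof to compare against. What your computation actually exposes is that the expression in the statement is strictly tighter than what Theorem~\ref{simple DP} yields, and a direct density-ratio check at the point $x=q(d)$ reproduces the Theorem~\ref{simple DP} value exactly. Hence the claimed bound cannot be obtained from Theorem~\ref{simple DP} by any convexity relaxation, and your plan as written does not close.
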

However, this bound is exactly the mean value of $e^{\epsilon(b)}$ and therefore, this distribution does not improve the bound given in Theorem~\ref{thm: RPLap mechut}, either.

\subsubsection{Continuous Probability Distributions}

We now investigate three compound Laplace distributions.

\vspace{0.05in}
(1) \textbf{Gamma distribution.} The gamma distribution is a
two-parameter family of continuous probability distributions with a
shape parameter $k>0$ and a scale parameter $\theta$. Besides the
generality, the gamma distribution is the maximum entropy probability
distribution (both w.r.t. a uniform base measure and w.r.t. a $1/x$ base measure) for a random variable $X$ for which
$\mathbb E (X) = k \theta = \alpha/\beta$ is fixed and greater than
zero, and $\mathbb E[\ln(X)] = \psi(k) + \ln(\theta) =
\psi(\alpha)-\ln(\beta)$ is fixed ($\psi$ is the digamma
function). Therefore, it may provide a relatively higher
privacy-utility trade-off in comparison to the other
candidates. A random
variable $X$ that is gamma-distributed with shape $\alpha$ and rate
$\beta$ is denoted by $X\sim \Gamma(k,\theta)$ and the corresponding
PDF is
\begin{equation*}
    f_{\Gamma}(X;k,\theta)
    {\displaystyle={\frac {x^{k -1}e^{-\frac{x}{\theta}}}{\Gamma (k)\cdot \theta^k}}\quad {\text{ for }}X>0{\text{ and }}k ,\theta >0,} 
\end{equation*}
where $\Gamma ( \alpha )$ is the gamma function. We now investigate the \DP guarantee provided by assuming that the reciprocal of the scale parameter $b$ in Laplace mechanism is distributed according to the gamma distribution (see Appendix~\cite{lamport93} for the proof).
\begin{thm}
\label{gamadist}
 The Randomized DP Laplace mechanism $M_q(d,\epsilon)$, $\epsilon \sim f_{\Gamma}(\epsilon;k,\theta)$, satisfies $\big((k+1)\cdot \ln (1+\Delta q \cdot \theta)\big)$-DP. 
\end{thm}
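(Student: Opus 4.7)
The plan is to apply Theorem~\ref{simple DP} directly: since the Gamma assumption sets $f_{1/b} = f_{\Gamma}(\,\cdot\,;k,\theta)$, the whole task reduces to plugging in the mean and the derivative of the MGF of the Gamma distribution, then taking a logarithm. So the work is entirely a textbook computation about $\Gamma(k,\theta)$, with no further probabilistic argument needed.

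First I would recall the two standard facts I need about $X \sim \Gamma(k,\theta)$: the mean is $\mathbb{E}(X) = k\theta$, and the moment generating function is $M_X(t) = (1-\theta t)^{-k}$, valid for $t < 1/\theta$. Since Theorem~\ref{simple DP} requires evaluating $M'_{1/b}$ at $t = -\Delta q \le 0$, and $-\Delta q < 1/\theta$ automatically (both $\Delta q$ and $\theta$ are nonnegative), we stay inside the MGF's domain of convergence --- this is the only analytic point worth flagging, and it is mild. Differentiating gives $M'_X(t) = k\theta (1-\theta t)^{-(k+1)}$.

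Next I would substitute $t = -\Delta q$ to obtain $M'_{1/b}(-\Delta q) = k\theta (1+\theta \Delta q)^{-(k+1)}$, and form the ratio appearing inside the logarithm in Equation~\ref{simple DPeq}:
\begin{equation*}
\frac{\mathbb{E}(1/b)}{M'_{1/b}(-\Delta q)} \;=\; \frac{k\theta}{k\theta\,(1+\theta\Delta q)^{-(k+1)}} \;=\; (1+\Delta q\cdot \theta)^{k+1}.
\end{equation*}
Taking the natural logarithm yields exactly the claimed bound $(k+1)\ln(1+\Delta q\cdot \theta)$, so the proof closes in one line after the substitution.

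Because the argument is essentially mechanical once Theorem~\ref{simple DP} is in hand, there is no real obstacle; the only thing to double-check is the cancellation of the $k\theta$ factors (i.e.\ that the mean of $\Gamma(k,\theta)$ matches the leading coefficient in $M'_X$ at $t=0$), and the harmless domain condition $-\Delta q < 1/\theta$ for the MGF. Both are immediate, so I expect the proof to be short --- barely more than the display above together with a sentence citing Theorem~\ref{simple DP}.
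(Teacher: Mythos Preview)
Your proposal is correct and follows exactly the route the paper intends: the paper derives this result by instantiating Theorem~\ref{simple DP} with the Gamma distribution's mean $k\theta$ and MGF $(1-\theta t)^{-k}$, which is precisely the substitution you carry out. The only difference is presentational---the paper defers the arithmetic to the appendix of~\cite{lamport93}, whereas you spell out the cancellation explicitly.
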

We now apply the necessary condition given in Equation~\ref{necc}.
\begin{lem}
\label{necsgama}
   Randomized DP using Gamma distribution can satisfy the necessary condition in Equation~\ref{necc}.
\end{lem}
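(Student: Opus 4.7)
The plan is to directly verify the inequality of Equation~\ref{necc}, namely $\mathbb E(\tfrac{1}{b})/M'_{\frac{1}{b}}(-\Delta q) < M_{\frac{1}{b}}(\Delta q)$, by substituting the Gamma quantities and reducing the problem to a clean one-parameter inequality whose satisfiability can be read off by a limit.

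First I would plug in the Gamma$(k,\theta)$ data. Its MGF is $M_{\frac{1}{b}}(t) = (1-\theta t)^{-k}$ for $t < 1/\theta$, with derivative $M'_{\frac{1}{b}}(t) = k\theta(1-\theta t)^{-k-1}$ and mean $\mathbb E(\tfrac{1}{b}) = k\theta$. The left-hand side of Equation~\ref{necc} then evaluates to $k\theta/[k\theta(1+\theta\Delta q)^{-k-1}] = (1+\theta\Delta q)^{k+1}$, in agreement with Theorem~\ref{gamadist}, and the right-hand side is $(1-\theta\Delta q)^{-k}$, which additionally forces the MGF-domain restriction $\theta\Delta q < 1$. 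Writing $x := \theta\Delta q \in (0,1)$, the necessary condition thus collapses to $h(k,x) := (1+x)^{k+1}(1-x)^{k} < 1$.

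Next I would establish that $h(k,x) < 1$ is achievable. The cleanest route is a limiting argument: for any fixed $k>0$, as $x \to 1^{-}$ the factor $(1+x)^{k+1}$ is bounded above by $2^{k+1}$ while $(1-x)^{k}\to 0$, so $h(k,x) \to 0$. By continuity there exists $x^\ast(k) \in (0,1)$ such that $h(k,x) < 1$ on $(x^\ast(k),1)$; any pair $(k,\theta)$ with $\theta\Delta q$ in that interval is a witness. For concreteness I would also record a numerical witness such as $k=\tfrac{1}{2},\ \theta\Delta q = 0.9$, which gives $h \approx 0.83 < 1$.

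The main obstacle to flag is not algebraic but interpretive: a Taylor expansion yields $\log h(k,x) = x - (2k+1)x^{2}/2 + O(x^{3})$, which is strictly positive for small $x$. In other words, the Gamma choice does \emph{not} improve upon the baseline in the small-$\theta\Delta q$ regime; the improvement region is intrinsically the high-$x$ corner near the boundary of the MGF's convergence strip. The proof has to make this scope of validity explicit, rather than claim satisfiability on all of $(0,1)$.
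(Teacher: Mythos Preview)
Your argument is correct. Both you and the paper reduce Equation~\ref{necc} for the Gamma$(k,\theta)$ law to the single inequality $(1+x)^{k+1}(1-x)^{k}<1$ with $x=\theta\Delta q\in(0,1)$, but you then produce witnesses along different slices of the $(k,x)$ rectangle. The paper fixes $x=\tfrac12$ (i.e.\ $\theta=\tfrac{1}{2\Delta q}$) and solves $(k+1)\ln(1.5)<k\ln 2$, obtaining the explicit threshold $k>1.4094$; you instead fix an arbitrary $k>0$ and send $x\to 1^{-}$, using $(1-x)^{k}\to 0$ to force $h(k,x)\to 0$, with the concrete check $(k,x)=(\tfrac12,0.9)$. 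Your route shows that \emph{every} shape parameter $k>0$ admits a feasible $\theta$, while the paper's route gives a clean closed-form range of $k$ at one specific $\theta$; either suffices for the lemma. Your Taylor remark $\log h(k,x)=x-\tfrac{2k+1}{2}x^{2}+O(x^{3})>0$ for small $x$ is an additional correct observation, absent from the paper, that localizes the improvement region away from $x=0$.
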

\begin{proof}
    We need to show that there exist $k$ and $\theta$ such that $(k+1)\cdot \ln (1+\Delta q \cdot \theta) < -k \cdot \ln (1-\Delta q \cdot \theta)$ , $\theta< \frac{1}{\Delta q}$. Given $\theta = \frac{1}{2 \Delta q}$, we need to show that $\exists k, k \cdot \ln (2)> (k+1) \cdot \ln (1.5)$, 
    which always holds for all $k>1.4094$.
\end{proof}
Therefore, Gamma distribution may improve over the baseline, and this can be
computed by optimizing the privacy-utility trade-off using the Lagrange multiplier
function in Equation~\ref{lagrange1}. Also, our optimization shows that, this distribution is more effective for large $\epsilon$ (weaker privacy guarantees). 

\vspace{0.05in}
(2) \textbf{Uniform distribution.}
In probability theory and statistics, the continuous uniform distribution or rectangular distribution is a family of symmetric probability distributions such that for each member of the family, all intervals of the same length on the support of the distribution are equally probable. The support is defined by the two parameters, $a$ and $b$, which are the minimum and maximum values. The distribution is often abbreviated as $U(a,b)$, which is the maximum entropy probability distribution for a random variable $X$ under no constraint; other than that, it is contained in the distribution's support. The MGF for $U(a,b)$ is \[ M_X(t)=\begin{cases}\frac{e^{tb}-e^{ta}}{t(b-a)}&{\text{for }}t\neq 0,\\1&{\text{for }}{\text{for }}t=0.\end{cases}\] Using Theorem~\ref{simple DP}, we now drive the precise \DP guarantee of an Randomized DP Laplace mechanism for uniform distribution $U(a,b)$. 

\begin{thm}
\label{uniformdist}
 The Randomized DP Laplace mechanism $M_q(d,\epsilon)$, $\epsilon \sim f_{U(a,b)}(\epsilon)$, is $\ln \big[\frac{\alpha^2-\beta^2}{2((1+\beta)e^{-\beta}-(1+\alpha)e^{-\alpha})} \big]$-differentially private, where $\alpha=a\cdot \Delta q$ and $\beta=b\cdot \Delta q$. 
\end{thm}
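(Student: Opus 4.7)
The plan is to invoke Theorem~\ref{simple DP} directly: since it asserts that $\mathcal{M}_q(d,b)$ is $\ln\!\left[\mathbb{E}(1/b)/M'_{1/b}(-\Delta q)\right]$-differentially private for any second-fold distribution on $1/b$ with an existing MGF, all the work reduces to plugging in the uniform-distribution moments and simplifying. So the two quantities I would compute are (i) $\mathbb{E}(1/b) = (a+b)/2$ from the basic mean of $U(a,b)$, and (ii) $M'_{1/b}(t)$ at $t=-\Delta q$, obtained by differentiating the stated MGF $M_X(t) = (e^{tb}-e^{ta})/[t(b-a)]$.

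For step (ii) I would apply the quotient rule once, which gives
\begin{equation*}
M'_X(t) \;=\; \frac{t\,(b e^{tb} - a e^{ta}) \;-\; (e^{tb}-e^{ta})}{t^{2}(b-a)}.
\end{equation*}
Evaluating at $t=-\Delta q$ and immediately substituting the rescaled variables $\alpha = a\,\Delta q$, $\beta = b\,\Delta q$ (so that $b e^{-b\Delta q} = \beta e^{-\beta}/\Delta q$ and similarly for the $a$-term) collapses the numerator to $(1+\alpha)e^{-\alpha}-(1+\beta)e^{-\beta}$ up to a factor of $\Delta q$, and the denominator to $\Delta q\,(\beta-\alpha)$. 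Combined with $\mathbb{E}(1/b)=(\alpha+\beta)/(2\Delta q)$, the factors of $\Delta q$ cancel in the ratio and $(\alpha+\beta)(\beta-\alpha)$ yields $\beta^{2}-\alpha^{2}$ in the numerator.

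The last cosmetic step is a sign flip: writing $\beta^{2}-\alpha^{2} = -(\alpha^{2}-\beta^{2})$ and $(1+\alpha)e^{-\alpha}-(1+\beta)e^{-\beta} = -\bigl((1+\beta)e^{-\beta}-(1+\alpha)e^{-\alpha}\bigr)$ puts the expression into the exact form claimed in the theorem. I would also briefly note that the argument of $\ln$ is positive: for $0<\alpha<\beta$, the function $x\mapsto (1+x)e^{-x}$ is strictly decreasing, so the denominator $(1+\beta)e^{-\beta}-(1+\alpha)e^{-\alpha}$ is negative, matching the negativity of $\alpha^{2}-\beta^{2}$ in the numerator, so their ratio is positive and the logarithm is well defined (and one checks it is nonnegative since the mechanism must satisfy some $\epsilon\ge 0$ bound).

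The main obstacle is purely bookkeeping: tracking the sign from $t=-\Delta q$ through the quotient-rule derivative, keeping the $\Delta q$ factors straight when rescaling to $(\alpha,\beta)$, and ensuring the final ratio is presented with the same ordering of terms as in the statement. There is no deep structural difficulty once Theorem~\ref{simple DP} is in hand; the only subtlety beyond algebra is verifying positivity of the argument of the logarithm, which reduces to the elementary monotonicity of $(1+x)e^{-x}$ noted above.
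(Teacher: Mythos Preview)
Your proposal is correct and follows exactly the route the paper indicates: it explicitly says ``Using Theorem~\ref{simple DP}'' just before stating Theorem~\ref{uniformdist}, so the intended proof is precisely to substitute the uniform MGF into the formula of Theorem~\ref{simple DP} and simplify, which is what you do. Your algebra checks out, including the sign-flip at the end and the positivity observation.
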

We now apply the necessary condition given in Equation~\ref{necc}. One can easily verify that the inequality holds for infinite number of settings, e.g., $a=0.5$, $b=9$ and $\Delta q=1.2$.
\begin{lem}
\label{necsuniform}
   Randomized DP using uniform distribution can satisfy the necessary condition in Equation~\ref{necc}.
\end{lem}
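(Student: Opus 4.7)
The plan is to reduce the inequality in Equation~\ref{necc} to a concrete numerical comparison that can be verified for the example parameters suggested right before the lemma statement, namely $a=0.5$, $b=9$, $\Delta q=1.2$, which give $\alpha=0.6$ and $\beta=10.8$. Since the lemma only asserts that the necessary condition \emph{can} be satisfied, producing one valid triple is sufficient; no general characterization is needed.

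First I would substitute the MGF of $U(a,b)$ from the expression already given in the paper. Evaluating at $t=\Delta q$ and rewriting with the change of variables $\alpha=a\cdot \Delta q$, $\beta=b\cdot \Delta q$, I would obtain
\begin{equation*}
M_{\frac{1}{b}}(\Delta q)=\frac{e^{\beta}-e^{\alpha}}{\beta-\alpha}.
\end{equation*}
Combined with the DP bound from Theorem~\ref{uniformdist}, the necessary condition in Equation~\ref{necc} becomes
\begin{equation*}
\frac{\alpha^{2}-\beta^{2}}{2\bigl((1+\beta)e^{-\beta}-(1+\alpha)e^{-\alpha}\bigr)} \;<\; \frac{e^{\beta}-e^{\alpha}}{\beta-\alpha}.
\end{equation*}

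Next I would check that the signs of numerator and denominator on the left are both negative for $\alpha<\beta$ (noting $(1+x)e^{-x}$ is decreasing on $x>0$), so that the ratio is a positive quantity, matching $e^\epsilon$; this avoids spurious sign flips when clearing denominators. Plugging in $\alpha=0.6$, $\beta=10.8$, the left side evaluates to approximately $66.3$, whereas the right side $\bigl(e^{10.8}-e^{0.6}\bigr)/10.2$ evaluates to roughly $4806$. The strict inequality $66.3<4806$ certifies that the chosen triple $(a,b,\Delta q)=(0.5,9,1.2)$ satisfies Equation~\ref{necc}, proving the lemma.

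The main obstacle is not really the arithmetic but bookkeeping between the two conflicting uses of the symbol $b$ in the excerpt (the Laplace scale parameter and the upper endpoint of the uniform distribution on $1/b$); I would be careful to disambiguate these in the proof text. A secondary, optional step would be to sketch why the family of valid parameters is in fact an open region in $(a,b,\Delta q)$-space, by observing that both sides of the reduced inequality are continuous in these parameters and the gap at the exhibited point is very large, so the condition is robust rather than a knife-edge phenomenon.
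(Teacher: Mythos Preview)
Your proposal is correct and follows the paper's own approach exactly: the paper simply asserts that one can verify Equation~\ref{necc} for, e.g., $a=0.5$, $b=9$, $\Delta q=1.2$, and you carry out that verification explicitly (with the same $\alpha=0.6$, $\beta=10.8$) and correctly. The only addition you make beyond the paper is the sign check and the continuity remark, both of which are sound and helpful but not required.
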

Therefore, Randomized DP using uniform distribution may improve over the baseline, and this can be computed by optimizing the privacy-utility trade-off using the Lagrange multiplier function in Equation~\ref{lagrange1}. Also, our simulation shows that, this distribution can also be effective for both small and large $\epsilon$. 
    

\vspace{0.05in}

(3) \textbf{Truncated Gaussian distribution.} The last distribution we
consider is the Truncated Gaussian distribution. This
distribution is derived from that of a normally distributed random
variable by bounding the random variable from either below or above
(or both). Therefore, we can benefit from the numerous useful
properties of Gaussian distribution, by truncating the
negative region of the Gaussian distribution. Suppose $X\sim \mathcal
N(\mu ,\sigma ^{2})$ has a Gaussian distribution and lies within the
interval $X\in (a,b),\;-\infty \leq a<b\leq \infty$. Then, $X$
conditional on $a<X<b$ has a truncated Gaussian distribution with the
following probability density function.

\vspace{-0.15in}

\begin{equation*}
    f_{\mathcal N^T}(X;\mu ,\sigma,a,b)
    {\displaystyle={\frac {\phi(\frac{X-\mu}{\sigma})}{\sigma\cdot \big(\Phi(\frac{b-\mu}{\sigma})-\Phi(\frac{a-\mu}{\sigma})\big)}}\quad {\text{for }}a\leq x\leq b}
\end{equation*}

and by $f_{\mathcal N^T}=0$ otherwise. Here, $\phi(x)=\frac{1}{\sqrt{2\pi}\cdot} e^{-\frac{x^2}{2}}$ and $\Phi(x)=1-Q(x)$ are PDF and CDF of the standard Gaussian distribution, respectively.
Next, using Theorem~\ref{simple DP}, we give the \DP guarantee provided by the mechanism assuming that the reciprocal of $b$ is distributed according to the truncated Gaussian distribution.
\begin{thm}
\label{truncateddist}
 The Randomized DP Laplace mechanism $\mathcal M_q(d,\epsilon)$, and $\epsilon \sim f_{\mathcal N^T}(\epsilon;\mu ,\sigma,a,b)$, satisfies $\epsilon_{N^T}$-DP, where 
 \begin{align}
    \epsilon_{N^T}=\ln\left[\cfrac{\mu+\cfrac{\sigma\cdot(\phi(\alpha)-\phi(\beta))}{(\Phi(\beta)-\Phi(\alpha))}}{\diff{M_{N^T}(t)}{t}|_t=-\Delta q}\right]
 \end{align}
 in which $\phi(\cdot)$ is the probability density function of the standard normal distribution, $\phi(\cdot)$ is its cumulative distribution function and $\alpha=\frac{a-\mu}{\sigma}$ and $\beta=\frac{b-\mu}{\sigma}$.
\end{thm}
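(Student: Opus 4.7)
The plan is to derive this as a direct corollary of Theorem~\ref{simple DP}, combined with the well-known formula for the first moment of a truncated Gaussian. Theorem~\ref{simple DP} already gives the universal expression
\[
\epsilon \;=\; \ln\!\left[\frac{\mathbb{E}(1/b)}{\left.\diff{M_{1/b}(t)}{t}\right|_{t=-\Delta q}}\right],
\]
so the work reduces to two substitutions: compute $\mathbb{E}(1/b)$ when $1/b \sim f_{\mathcal N^T}(\cdot;\mu,\sigma,a,b)$, and identify the MGF in the denominator as $M_{\mathcal N^T}$.

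First I would compute the mean of the truncated Gaussian. Writing $\mathbb{E}(1/b) = \int_a^b x\, f_{\mathcal N^T}(x;\mu,\sigma,a,b)\,dx$ and making the standardizing substitution $y = (x-\mu)/\sigma$, the integral splits into a $\mu$-term (which integrates to $\mu$ because the normalizing constant absorbs $\Phi(\beta)-\Phi(\alpha)$) and a $\sigma y\,\phi(y)$-term. Using the elementary identity $\phi'(y) = -y\phi(y)$, the second piece integrates to $\sigma\bigl(\phi(\alpha)-\phi(\beta)\bigr)/\bigl(\Phi(\beta)-\Phi(\alpha)\bigr)$. This yields exactly the numerator appearing in $\epsilon_{N^T}$.

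Next I would handle the denominator. The theorem statement leaves $M'_{\mathcal N^T}(t)\big|_{t=-\Delta q}$ symbolic, so it suffices to identify $M_{1/b} = M_{\mathcal N^T}$, which is immediate by the definition of MGF applied to $1/b \sim f_{\mathcal N^T}$. (If one wanted the closed form, one would use the standard identity $M_{\mathcal N^T}(t) = e^{\mu t + \sigma^2 t^2/2}\cdot \bigl[\Phi(\beta-\sigma t) - \Phi(\alpha-\sigma t)\bigr]/\bigl[\Phi(\beta) - \Phi(\alpha)\bigr]$ and differentiate, but this is not required for the statement as written.) Plugging the numerator and denominator into the formula of Theorem~\ref{simple DP} yields $\epsilon_{N^T}$ exactly as displayed.

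The only non-mechanical step is the computation of $\mathbb{E}(1/b)$, and within that the one step that requires care is recognizing $\int_\alpha^\beta y\,\phi(y)\,dy = \phi(\alpha)-\phi(\beta)$ (note the sign and the order of $\alpha,\beta$). Aside from that, the proof is a template application of Theorem~\ref{simple DP} to the specific family $f_{\mathcal N^T}$, so the main obstacle is notational bookkeeping — in particular, keeping the Gaussian support parameters $a,b$ distinct from the Laplace scale parameter $b$ that has been reciprocated, and making sure the standardized arguments $\alpha = (a-\mu)/\sigma$, $\beta = (b-\mu)/\sigma$ are substituted consistently throughout.
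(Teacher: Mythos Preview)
Your proposal is correct and matches the paper's own approach: the paper explicitly introduces Theorem~\ref{truncateddist} with the phrase ``using Theorem~\ref{simple DP}, we give the \DP guarantee,'' so the intended proof is precisely the template instantiation you describe---plug the known truncated-Gaussian mean $\mu + \sigma\,\dfrac{\phi(\alpha)-\phi(\beta)}{\Phi(\beta)-\Phi(\alpha)}$ into the numerator of Equation~\eqref{simple DPeq} and leave the denominator as $M'_{\mathcal N^T}(-\Delta q)$. Your derivation of the numerator via the substitution $y=(x-\mu)/\sigma$ and the identity $\phi'(y)=-y\phi(y)$ is the standard one, and the bookkeeping caveats you flag (the overloaded symbol $b$, the sign in $\int_\alpha^\beta y\,\phi(y)\,dy$) are exactly the places where care is needed.
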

\begin{lem}
\label{necstrunc}
  Randomized DP using truncated Gaussian distribution can satisfy the necessary condition in Equation~\ref{necc}.
\end{lem}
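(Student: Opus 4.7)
My plan is to follow the template of Lemmas~\ref{necsgama} and~\ref{necsuniform}, namely to exhibit an explicit parameter tuple for the truncated Gaussian under which the inequality in Equation~\ref{necc} holds strictly. I would start by substituting the standard closed-form MGF of the truncated Gaussian, $M_{N^T}(t) = e^{\mu t + \sigma^2 t^2/2}\cdot\frac{\Phi(\beta - \sigma t) - \Phi(\alpha - \sigma t)}{\Phi(\beta) - \Phi(\alpha)}$, into both sides of $\mathbb E(1/b)/M'_{N^T}(-\Delta q) < M_{N^T}(\Delta q)$. The left-hand side is exactly the tight DP guarantee derived in Theorem~\ref{truncateddist}, while the right-hand side is read off directly by evaluating $M_{N^T}$ at $t=\Delta q$.

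The conceptual engine behind the target inequality is Jensen's inequality applied to the strictly convex map $x\mapsto e^{x\Delta q}$: for any non-degenerate distribution of $1/b$ supported on the positive reals one has $\mathbb E[e^{\Delta q/b}] > e^{\mathbb E[\Delta q/b]}$. Since the truncated Gaussian with $\sigma>0$ and $a<b$ is never degenerate, a strict gap is available in principle; the real question is whether the tight DP guarantee of the Randomized DP mechanism---which is generally sharper than the naive Jensen upper bound $\mathbb E[e^{\Delta q/b}]$---actually remains below that bound in a concrete parameter regime. To make the comparison explicit I would select a convenient tuple, e.g.\ $\mu=0$, $\sigma=1$, $a=0.1$, $b=5$, $\Delta q=1$, which reduces the task to checking a single scalar inequality involving a handful of evaluations of $\phi$ and $\Phi$.

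To close the argument I would either verify that scalar inequality analytically, using the identities $\phi'(x)=-x\,\phi(x)$ together with Mills'-ratio bounds on $\Phi$, or, following the style of Lemma~\ref{necsuniform}, simply assert numerical verification at the chosen tuple together with a continuity argument showing that the set of admissible parameters is open (hence non-empty). Either route suffices for the lemma's qualitative claim, since only existence of valid parameters is required.

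The main obstacle is algebraic rather than conceptual. The truncated-Gaussian MGF mixes an exponential-quadratic factor with a ratio of Gaussian CDFs, and differentiating it to obtain $M'_{N^T}(-\Delta q)$ spawns several cross terms through the chain rule acting on both $\Phi(\beta-\sigma t)$ and $\Phi(\alpha-\sigma t)$. Organising this bookkeeping so that the final ratio collapses to a form amenable to a clean Jensen-style comparison is the delicate step; I anticipate that the cleanest presentation will mirror Lemma~\ref{necsuniform}, deferring the tedious closed-form algebra to the appendix and relying on a single explicit tuple plus continuity to secure the existence claim.
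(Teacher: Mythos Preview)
Your proposal is sound and matches the paper's own treatment. The paper does not spell out a proof for Lemma~\ref{necstrunc} in the main text at all; it handles the analogous uniform case (Lemma~\ref{necsuniform}) exactly as you suggest, by asserting that ``one can easily verify that the inequality holds'' at a single explicit tuple ($a=0.5$, $b=9$, $\Delta q=1.2$), and simply states the truncated-Gaussian lemma without further justification, implicitly deferring to the same verify-at-a-tuple strategy (and to the external appendix of~\cite{lamport93}). So your plan---write down $M_{N^T}$, plug a concrete $(\mu,\sigma,a,b,\Delta q)$ into both sides of Equation~\ref{necc}, and invoke continuity---is precisely in the spirit of the paper's argument, and in fact more explicit than what the paper itself provides.

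One minor remark: your Jensen paragraph is heuristically suggestive but, as you correctly note, does not by itself close the gap, since the left side $\mathbb E(1/b)/M'_{1/b}(-\Delta q)$ is not $e^{\Delta q\,\mathbb E[1/b]}$. The paper makes no attempt at such a conceptual reduction either; the content of these lemmas, both in the paper and in your proposal, is ultimately a numerical check at one point, and your write-up would be well served by leading with the concrete tuple and relegating the Jensen intuition to a motivating aside.
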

Therefore, truncated Gaussian distribution may improve over the baseline, and this can be computed by optimizing the privacy-utility trade-off using the Lagrange multiplier
function in Equation~\ref{lagrange1}. In particular, our simulation shows that, this distribution can also be effective for smaller $\epsilon$ (stronger privacy guarantees).

\subsection{Expanding the Search Space with Combined PDFs} 
\label{compoesec}
Theorem~\ref{thm:lin} can be directly applied to design a utility-maximizing
Randomized DP Laplace mechanism with a sufficiently large search space
(infinite number of different random variables). Since the Laplace mechanism has already been studied
under $\ell_1, \ell_2$ and entropy, we will focus on the usefulness metric. 
\begin{coro}
[Optimal Utility for Combined RVs]
\label{thm:mgffin}
 If $x_1, x_2, \cdots, x_n$ are $n$ independent random variables with respective MGFs $M_{x_i}(t)=\mathbb E (e^{t x_i})$ for $i = 1, 2,\cdots, n$, then for the linear combination $Y=\sum\limits_{i=1}^{n}a_ix_i$, 
the optimal usefulness (similar relation holds for other metrics) under $\epsilon$-\DP constraint is given as 
 \begin{eqnarray}
 \scriptsize
    \label{mgffin}
       &\hspace{-1.5cm}  U_{Y}(\epsilon, \Delta q, \gamma)=1-\min\limits_{\mathcal{A,U}} \bigg\{ \prod \limits_{i=1}^{n} M_{x_i} (-a_i\gamma) \bigg\}\\
&\hspace{-7cm} \text{subject to           }  \nonumber\\ 
& \hspace{-.3cm} \epsilon= ln\left[\cfrac{\sum \limits_{j=1}^{n} a_j\cdot E_{x_j}(\frac{1}{b})}{\sum \limits_{j=1}^{n} a_j\cdot M'_{x_j}(a_j\cdot -\Delta q) \cdot \prod \limits_{\substack{i=1 \\ i\neq j}}^{n} M_{x_i} (-a_i\cdot \Delta q)}\right] \nonumber  \nonumber
    \end{eqnarray}
    where $\mathcal{A}=\{a_1,a_2,\cdots, a_n\}$ is the set of the coefficients and $\mathcal{U}=\{u_1,u_2,\cdots, u_n\}$ is the set of parameters of the probability distributions of RVs $x_i, \ \forall i \leq n$. 
\end{coro}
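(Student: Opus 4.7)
The strategy is to recognize the statement as a direct composition of three results already established in the paper: Corollary~\ref{co1}, which reduces the optimal usefulness of any candidate second-fold distribution $f_{1/b}$ to $1 - \min_{u} M_{f(u)}(-\gamma)$; Theorem~\ref{thm:lin}, which gives the MGF of a linear combination of independent random variables as a product; and the earlier Corollary on \DP of combined PDFs, which already provides the DP guarantee when $1/b$ is itself a linear combination. The proof is therefore a substitution and bookkeeping argument rather than a new derivation.

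First, I would set the reciprocal scale parameter to be $1/b = Y = \sum_{i=1}^n a_i x_i$, treating the linear combination as the second-fold distribution governing the Randomized DP Laplace mechanism. By Theorem~\ref{thm:lin}, the MGF of $Y$ has the closed form $M_Y(t) = \prod_{i=1}^n M_{x_i}(a_i t)$. Substituting this into the objective of Corollary~\ref{co1}, which asserts $U_f(\epsilon,\Delta q,\gamma) = 1 - \min M_{f(u)}(-\gamma)$, and evaluating at $t = -\gamma$ yields $U_Y(\epsilon,\Delta q,\gamma) = 1 - \min \prod_i M_{x_i}(-a_i \gamma)$. Because both the coefficient vector $\mathcal{A}$ and the parameter sets $\mathcal{U}$ of each factor are now freely selectable, the outer minimization ranges over $\mathcal{A} \times \mathcal{U}$, matching the claimed form.

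Next, I would instantiate the $\epsilon$-DP side-constraint by directly invoking the earlier Corollary on \DP of combined PDFs. The numerator $\mathbb{E}(1/b) = \sum_j a_j \mathbb{E}(x_j)$ follows from linearity of expectation, and the denominator is obtained by differentiating the product MGF via the product rule and evaluating at $t = -\Delta q$, giving $\sum_j a_j M'_{x_j}(-a_j \Delta q) \prod_{i \neq j} M_{x_i}(-a_i \Delta q)$. Combining the two expressions reproduces the logarithmic constraint as stated.

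The main obstacle is not mathematical but organizational: one must take care that the indices $(\mathcal{A},\mathcal{U})$ over which the minimum is taken are precisely the degrees of freedom that the DP constraint also depends upon, so that the feasible set in the optimization problem is well-defined and nonempty. In particular, one should verify that for each admissible $\epsilon$ there exist choices of $\mathcal{A}$ and $\mathcal{U}$ satisfying the constraint (existence of the MGFs must be preserved under the chosen linear combination, which by Theorem~\ref{thm: RPLap mech} requires non-negative support of $Y$). Once this is verified, the formula follows by direct substitution and no further analysis is required.
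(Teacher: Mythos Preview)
Your proposal is correct and mirrors the paper's own treatment: the paper presents this corollary as an immediate consequence of Theorem~\ref{thm:lin} (product form of the MGF for $Y=\sum_i a_i x_i$), the usefulness reduction $U_f=1-\min M_{f(u)}(-\gamma)$ from Corollary~\ref{co1}, and the combined-PDF \DP guarantee of Corollary~4.5, assembled by direct substitution. No additional ideas are needed beyond the bookkeeping you outline.
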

Similar to the case of single RVs, we can compute the optimal solution for this
optimization problem using the Lagrange multiplier function in
Equation~\ref{lagrange1}.

We will focus on all RVs that are produced using linear combinations
of the Gamma, uniform and truncated Gaussian
distributions (which include both weak and strong privacy-preserving PDFs). Therefore, the
corresponding Lagrange multiplier function is
\begin{eqnarray}
\label{objboth}
 &\hspace{-2cm}\mathcal{L} (a_1,a_2,a_3,k, \theta,a_u,b_u,\mu,\sigma,a_{\mathcal N^T}, b_{\mathcal N^T}, \Lambda) \\    &\hspace{-2cm}= M_{\Gamma(k,\theta)}(-a_1\gamma) \cdot M_{U(a_u,b_u)}(-a_2\gamma) \nonumber\\
 &\cdot M_{\mathcal{N}^T(\mu,\sigma,a_{\mathcal N^T}, b_{\mathcal N^T})}(-a_3\gamma) +\Lambda \cdot (\ln \Bigg[\cfrac{\mathsf{N}}{\mathsf{D}}\Bigg]-\epsilon)   \nonumber
\end{eqnarray}
where the numerator and the denominator $\mathsf{N, \ D}$ are
\begin{eqnarray*}
&\hspace{-7.5cm}\mathsf{N}=\\&(a_1 \cdot k \cdot \theta)+ (a_2 \cdot \frac{a+b}{2})+(a_3 \cdot (\mu+(\cfrac{\sigma\cdot\phi(\alpha)-\phi(\beta))}{(\Phi(\beta)-\Phi(\alpha))}))
\end{eqnarray*}
\begin{eqnarray*}
&\hspace{-0.7cm}\mathsf{D}=a_1 \cdot M'_{\Gamma(k,\theta)}(-a_1\cdot \Delta q) \cdot M_{U(a_u,b_u)}(-a_2 \cdot\Delta q) \\ &\cdot M_{\mathcal{N}^T(\mu,\sigma,a_{\mathcal N^T}, b_{\mathcal N^T})}(-a_3\cdot \Delta q)\\& +a_2 \cdot M_{\Gamma(k,\theta)}(-a_1\cdot \Delta q) \cdot M'_{U(a_u,b_u)}(-a_2 \cdot\Delta q) \\ &\cdot M_{\mathcal{N}^T(\mu,\sigma,a_{\mathcal N^T}, b_{\mathcal N^T})}(-a_3\cdot \Delta q)\\& +a_3\cdot M_{\Gamma(k,\theta)}(-a_1\cdot \Delta q) \cdot M_{U(a_u,b_u)}(-a_2 \cdot\Delta q) \\ &\cdot M'_{\mathcal{N}^T(\mu,\sigma,a_{\mathcal N^T}, b_{\mathcal N^T})}(-a_3\cdot \Delta q)
\end{eqnarray*}

Finally, Algorithm~\ref{alg:analyst-actions1} details our Randomized DP Laplace mechanism using linear combinations of these three   
PDFs. In Section~\ref{exp:sec}, using experiments and simulation results, we show that Algorithm~\ref{alg:analyst-actions1} can indeed outputs near-optimal results.

\label{sec:algm}

 \begin{algorithm}[!h]
 \small
 \SetKwInOut{Input}{Input}
 \SetKwInOut{Output}{Output}
\Input{Dataset $D$, Privacy budget $\epsilon$, Query $q(\cdot)$, Metric and its parameters (from data recipient)}

\Output{Query result $q(D)+Lap(b_r)$ which is $\epsilon$-DP and is near-optimal w.r.t. the utility requirement}


$\Delta q\leftarrow$ Sensitivity ($q(\cdot)$) 

Find optimal parameters from Lagrange Multiplier $\mathcal{L}(\epsilon, \Delta q,\text{metric})=$ $\{a_{1}^{opt},a_{2}^{opt}, a_{3}^{opt}, k^{opt}, \theta^{opt},a^{opt}_u,b^{opt}_u,\mu^{opt},\sigma^{opt},a_{\mathcal N^T}^{opt}, b_{\mathcal N^T}^{opt}\}$ 
 
 
$X_1 \sim \Gamma(k^{opt}, \theta^{opt})$\\
 $X_2 \sim U(a^{opt}_u,b^{opt}_u)$	\\
 $X_3\sim \mathcal{N}^T(\mu^{opt},\sigma^{opt},a_{\mathcal N^T}^{opt}, b_{\mathcal N^T}^{opt})$

 $\frac{1}{b_r} = a_1^{opt} \cdot X_1 + a_2^{opt} \cdot X_2+a_3^{opt} \cdot X_3$ 

\textbf{return} $q(D)+Lap(b_r)$
 	\caption{Randomized DP with 3 PDFs}
 	\label{alg:analyst-actions1}
 \end{algorithm}


\begin{thebibliography}{9}
\bibitem{textbook}
J. Le Ny and M. Mohammady, "Differentially Private MIMO Filtering for Event Streams," in IEEE Transactions on Automatic Control, vol. 63, no. 1, pp. 145-157, Jan. 2018, doi: 10.1109/TAC.2017.2713643.

\bibitem{lamport94}
Meisam Mohammady, Lingyu Wang, Yuan Hong, Habib Louafi, Makan Pourzandi, and Mourad Debbabi. 2018. Preserving Both Privacy and Utility in Network Trace Anonymization. In Proceedings of the 2018 ACM SIGSAC Conference on Computer and Communications Security (CCS '18). Association for Computing Machinery, New York, NY, USA, 459–474. https://doi.org/10.1145/3243734.3243809.

\bibitem{lamport93}
Meisam Mohammady, Shangyu Xie, Yuan Hong, Mengyuan Zhang, Lingyu Wang, Makan Pourzandi, and Mourad Debbabi. 2020. R2DP: A Universal and Automated Approach to Optimizing the Randomization Mechanisms of Differential Privacy for Utility Metrics with No Known Optimal Distributions. In Proceedings of the 2020 ACM SIGSAC Conference on Computer and Communications Security (CCS '20). Association for Computing Machinery, New York, NY, USA, 677–696. https://doi.org/10.1145/3372297.3417259

\end{thebibliography}
\end{document}